\definecolor{winered}{rgb}{0.6,0,0}
\definecolor{lessblue}{rgb}{0,0,0.7}
\newcommand{\myitem}[3]{\item[#2]\def\@currentlabel{#3}\label{#1}}
\def\@tocline#1#2#3#4#5#6#7{
\begingroup
  \par
    \parindent\z@ \leftskip#3 \relax \advance\leftskip\@tempdima\relax
                  \rightskip\@pnumwidth plus 4em \parfillskip-\@pnumwidth
    \ifcase #1 
       \vskip 0.6em \hskip 0em 
       \or
       \or \hskip 0em 
       \or \hskip 1em 
    \fi%
    %
    #6
    %
    \nobreak\relax{\leavevmode\leaders\hbox{\,.}\hfill}
    \hbox to\@pnumwidth {\@tocpagenum{#7}}
  \par
\endgroup
}
 \def\l@section{\@tocline{0}{0pt}{0pc}{}{}}
\renewcommand{\tocsection}[3]{%
  \indentlabel{\@ifnotempty{#2}{ 
    \ignorespaces\bfseries{#2. #3}}}
  \indentlabel{\@ifempty{#2}{\ignorespaces\bfseries{#3}}{}} 
    \vspace{1.5pt}}
\renewcommand{\tocsubsection}[3]{%
  \indentlabel{\@ifnotempty{#2}{
    \ignorespaces#2. #3}}
  \indentlabel{\@ifempty{#2}{\ignorespaces #3}{}}
    \vspace{1.5pt}}
\renewcommand{\tocsubsubsection}[3]{%
  \indentlabel{\@ifnotempty{#2}{
    \ignorespaces#2. #3}}
  \indentlabel{\@ifempty{#2}{\ignorespaces #3}{}}
    \vspace{1.5pt}}
\def\@nomenstarted{0}
\newlength{\@nomenoldtabcolsep}
\newcommand{\nomenstart}
  {%
    \def\@nomenstarted{1}%
    \setlength{\@nomenoldtabcolsep}{\tabcolsep}%
    \setlength{\tabcolsep}{3.5pt}%
    \begin{longtable}{p{0.11\textwidth} p{0.86\textwidth}}
  }
\newcommand{\nomenitem}[2]{%
    \ifcase\@nomenstarted%
      \or 
      \or \\ 
    \fi%
    #1\,{\leavevmode\leaders\hbox{\,.}\hfill} & #2%
    \def\@nomenstarted{2}%
  }%
\newcommand{\nomenend}
  {\\%
      \end{longtable}%
      \setlength{\tabcolsep}{\@nomenoldtabcolsep}%
      \def\@nomenstarted{0}%
  }
\newcommand{\vast}{\bBigg@{4}}
\newcommand{\Vast}{\bBigg@{5}}
\newcommand{\VAST}[1]{\bBigg@{#1}}
\numberwithin{equation}{section}
\numberwithin{figure}{section}
\newtheorem{thm}{Theorem}[section]
\newtheorem{prop}[thm]{Proposition}
\newtheorem{lemma}[thm]{Lemma}
\newtheorem*{thm*}{Theorem}
\newtheorem*{prop*}{Proposition}
\newtheorem*{cor*}{Corollary}
\newtheorem*{conj*}{Conjecture}
\theoremstyle{definition}
\theoremstyle{remark}
\newtheorem{rmk}[thm]{Remark}
\newcommand{\fakephantomsection}{%
  \Hy@MakeCurrentHref{\@currenvir.\the\Hy@linkcounter}
  \Hy@raisedlink{\hyper@anchorstart{\@currentHref}\hyper@anchorend}%
  \Hy@GlobalStepCount\Hy@linkcounter%
}
\newcommand{\mc}{\mathcal}
\newcommand{\cC}{\mc C}
\newcommand{\cL}{\mc L}
\newcommand{\cO}{\mc O}
\newcommand{\cU}{\mc U}
\newcommand{\cV}{\mc V}
\newcommand{\ms}{\mathscr}
\newcommand{\sE}{\ms E}
\newcommand{\sK}{\ms K}
\newcommand{\sR}{\ms R}
\newcommand{\R}{\mathbb{R}}
\newcommand{\sfG}{\mathsf{G}}
\newcommand{\supp}{\operatorname{supp}}
\newcommand{\tr}{\operatorname{tr}}
\newcommand{\eps}{\epsilon}
\newcommand{\la}{\langle}
\newcommand{\pa}{\partial}
\newcommand{\dd}{{\mathrm d}}
\newcommand{\ra}{\rangle}
\newcommand{\ul}[1]{\underline{#1}{}}
\newcommand{\pfstep}[1]{$\bullet$\ \underline{\textit{#1}}}
\newcommand{\cp}{{\mathrm{c}}}
\newcommand{\Diff}{\mathrm{Diff}}
\newcommand{\half}{{\tfrac{1}{2}}}
\newcommand{\loc}{{\mathrm{loc}}}
\newcommand{\CI}{\cC^\infty}
\newcommand{\CIc}{\cC^\infty_\cp}
\newcommand{\Ric}{\mathrm{Ric}}
\newcommand{\Ein}{\mathrm{Ein}}
\newcommand{\openbigpmatrix}[1]
  {%
    \def\@bigpmatrixsize{#1}%
    \addtolength{\arraycolsep}{-#1}%
    \begin{pmatrix}%
  }
\newcommand{\closebigpmatrix}
  {%
    \end{pmatrix}%
    \addtolength{\arraycolsep}{\@bigpmatrixsize}%
  }
\newlength{\enummargin}\setlength{\enummargin}{1.5em}
\newcommand{\usref}[1]{{\upshape\ref{#1}}}
\newcommand*{\fwbw}[1]{\expandafter\@fwbw\csname c@#1\endcsname}
\newcommand*{\@fwbw}[1]{\ifcase #1 \or {\rm fw}\or {\rm bw}\fi}
\AddEnumerateCounter{\fwbw}{\@fwbw}
\begin{document}

\title{The linearized Einstein equations with sources}

\date{\today}

\begin{abstract}
  On vacuum spacetimes of general dimension, we study the linearized Einstein vacuum equations with a spatially compactly supported and (necessarily) divergence-free source. We prove that the vanishing of appropriate charges of the source, defined in terms of Killing vector fields on the spacetime, is necessary and sufficient for solvability within the class of spatially compactly supported metric perturbations. The proof combines classical results by Moncrief with the solvability theory of the linearized constraint equations with control on supports developed by Corvino and Chru\'sciel--Delay.
\end{abstract}

\subjclass[2010]{Primary: 83C05, Secondary: 35L05, 35N10}

\author{Peter Hintz}
\address{Department of Mathematics, ETH Z\"urich, R\"amistrasse 101, 8092 Z\"urich, Switzerland}
\email{peter.hintz@math.ethz.ch}

\maketitle

\section{Introduction}
\label{SI}

Let $(M,g)$ be a smooth connected globally hyperbolic spacetime, of dimension $n+1$ where $n\geq 1$, which solves the Einstein vacuum equations
\begin{equation}
\label{EqIEinstein}
  \Ein(g)=0,\qquad \Ein(g):=\Ric(g) - \frac12 R_g g.
\end{equation}
Here $\Ric(g)$ and $R_g$ denote the Ricci and scalar curvature, respectively. Let $\Sigma\subset M$ denote a smooth spacelike Cauchy hypersurface; denote its unit normal by $\nu_\Sigma$ and the surface measure by $\dd\sigma$. We study linearized perturbations of $g$ sourced by smooth linearized stress-energy-momentum tensors $f\in\CI_{\rm sc}(M;S^2 T^*M)$ which are \emph{spatially compactly supported} (hence the subscript `sc'): this means that there exists a compact subset $K\subset\Sigma$ so that $\supp f\subset\bigcup_\pm J^\pm(K)$, with $J^\pm(K)\subset M$ denoting the causal future (`$+$'), resp.\ past (`$-$'), of $K$.\footnote{The space $\CI_{\rm sc}(M)$ is well-defined independently of the choice of Cauchy hypersurface.} That is, we shall study the equation
\begin{equation}
\label{EqI}
  D_g\Ein(h) := \frac{\dd}{\dd s}\Ein(g+s h)\Big|_{s=0} = f.
\end{equation}
Recall the second Bianchi identity, which states $\delta_g\Ein(g)=0$ for \emph{all} metrics $g$ (where $\delta_g$ is the negative divergence operator); linearizing this in $g$ and using~\eqref{EqIEinstein} gives
\begin{equation}
\label{EqIDiv}
  \delta_g\bigl(D_g\Ein(h)\bigr) = 0\quad\forall\,h\in\CI(M;S^2 T^*M).
\end{equation}
Therefore, a \emph{necessary} condition for the solvability of~\eqref{EqI} is $\delta_g f=0$. The main result of this note precisely determines the extent to which this condition is also \emph{sufficient}. To state it, denote by $\sK(M,g)\subset\cV(M)=\CI(M;T M)$ the (finite-dimensional) space of Killing vector fields on $(M,g)$.

\begin{thm}[Main theorem, smooth version]
\label{ThmI}
  The map
  \begin{equation}
  \label{EqIIsoMap}
    \CI_{\rm sc}(M;S^2 T^*M)\ni f\mapsto \int_\Sigma f(\nu_\Sigma,X)\,\dd\sigma,\qquad X\in\sK(M,g),
  \end{equation}
  induces a linear isomorphism
  \begin{equation}
  \label{EqIIso}
    \bigl\{ f\in\CI_{\rm sc}(M;S^2 T^*M)\colon \delta_g f=0 \bigr\} \,/\, \bigl\{ D_g\Ein(h) \colon h\in\CI_{\rm sc}(M;S^2 T^*M) \bigr\} \to \sK(M,g)^*.
  \end{equation}
  Here $\sK(M,g)^*=\cL(\sK(M,g),\R)$ is the dual space. This isomorphism is moreover independent of the choice of Cauchy hypersurface $\Sigma$.
\end{thm}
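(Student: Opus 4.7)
The plan is to establish the asserted isomorphism in three steps---well-definedness (with factorization through the quotient), injectivity of the induced map, and surjectivity---with the injectivity step being the main obstacle. The two principal tools will be Moncrief's theorem identifying $\ker DC^*$ with the space of Killing initial data (where $DC$ denotes the linearized constraint operator on $\Sigma$), together with the Corvino--Chru\'sciel--Delay solvability of the linearized constraints with compactly supported solutions.

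For well-definedness, I first verify $\Sigma$-independence: for divergence-free $f$ and Killing $X$, the vector field $Y^\mu:=f^{\mu\nu}X_\nu$ satisfies $\nabla_\mu Y^\mu=(\delta_g f)^\nu X_\nu-f^{\mu\nu}\nabla_{(\mu}X_{\nu)}=0$, so Stokes's theorem applied between two Cauchy surfaces (with lateral boundary terms killed by spatial compactness of $f$) gives independence. To see vanishing on $D_g\Ein(h)$ for $h\in\CI_{\rm sc}(M;S^2 T^*M)$, the Gauss--Codazzi equations identify $\Ein(g)(\nu_\Sigma,\cdot)|_\Sigma$ with the nonlinear constraint map $C(\bar g,k)$; linearizing yields $D_g\Ein(h)(\nu_\Sigma,\cdot)|_\Sigma=DC(a,b)$, where $(a,b)$ is the linearized Cauchy data of $h$ on $\Sigma$. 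Integration by parts on the boundaryless manifold $\Sigma$ (valid since $(a,b)$ has compact support) transfers the pairing with $X$ to $\langle(a,b),DC^*(g(X,\nu_\Sigma),X^\parallel)\rangle$, which vanishes by Moncrief's theorem.

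For injectivity, I consider $f$ divergence-free, spatially compactly supported, with all Killing charges zero. Applying Corvino--Chru\'sciel--Delay on $\Sigma$, the linearized constraint equation $DC(a,b)=f(\nu_\Sigma,\cdot)|_\Sigma$ admits a solution $(a,b)\in\CIc(\Sigma)$: the sole obstruction is the pairing of $f(\nu_\Sigma,\cdot)$ with Killing data, i.e.\ the charges of $f$, which vanish by assumption. I then solve the Cauchy problem $\BoxGauge h=f$ for the wave gauge-fixed modification of $D_g\Ein$, with data $(a,b)$ on $\Sigma$; by finite propagation speed the solution $h$ is spatially compactly supported. The satisfaction of the linearized constraints by $(a,b)$ implies $\Ups(h)|_\Sigma=0=\pa_{\nu_\Sigma}\Ups(h)|_\Sigma$, and combined with the homogeneous wave equation $\delta_g\delta_g^*\Ups(h)=-\delta_g f=0$ (derived from the linearized Bianchi identity and $\delta_g f=0$), uniqueness for the Cauchy problem yields $\Ups(h)\equiv 0$, hence $D_g\Ein(h)=\BoxGauge h+\delta_g^*\Ups(h)=f$. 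This is the crux: here the support-controlled constraint solvability of Corvino--Chru\'sciel--Delay must be coordinated with the gauge propagation for the wave equation.

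For surjectivity, given $\ell\in\sK(M,g)^*$, the restriction map $\sK(M,g)\ni X\mapsto(g(X,\nu_\Sigma),X^\parallel)$ is injective (a Killing vector vanishing along a Cauchy surface vanishes identically), so by duality I can pick compactly supported $(\rho,\jmath)$ on $\Sigma$ with $\int_\Sigma\bigl(\rho\,g(X,\nu_\Sigma)+\jmath(X^\parallel)\bigr)\,\dd\sigma=\ell(X)$ for all $X\in\sK(M,g)$. Extending to $f^0\in\CIc(M;S^2 T^*M)$ with $f^0(\nu_\Sigma,\cdot)|_\Sigma=(\rho,\jmath)$, I then solve the wave-type equation $\delta_g\delta_g^*Z=\delta_g f^0$ for a $1$-form $Z$ with trivial Cauchy data on $\Sigma$; since $\delta_g\delta_g^*$ has scalar wave principal part, $Z$ is spatially compact with $Z|_\Sigma=0=\pa_{\nu_\Sigma}Z|_\Sigma$, whence $\delta_g^* Z|_\Sigma=0$. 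Then $f:=f^0-\delta_g^*Z$ is divergence-free, spatially compactly supported, and satisfies $f(\nu_\Sigma,\cdot)|_\Sigma=(\rho,\jmath)$, realizing the prescribed charges.
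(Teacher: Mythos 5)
Your proposal follows the same route as the paper: Moncrief's identification of $\ker(D_{\gamma,k}P)^*$ with $\sK(M,g)$, the Corvino--Chru\'sciel--Delay solvability of the linearized constraints with compactly supported solutions, a gauge-fixed Cauchy problem combined with finite speed of propagation, and gauge propagation via a homogeneous wave equation; the surjectivity step (realize the functional by compactly supported data on $\Sigma$, extend, and correct by a pure-gauge term with trivial Cauchy data) is also the paper's argument. There is, however, one step in your injectivity argument that does not work as written. You assert that the satisfaction of the linearized constraints by $(a,b)$ implies $\Ups(h)|_\Sigma=0=\pa_{\nu_\Sigma}\Ups(h)|_\Sigma$. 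The constraints do not control $\Ups(h)|_\Sigma$: the linearized constraint data $(\dot\gamma,\dot k)$ determine only part of the full Cauchy data $(h|_\Sigma,\nabla_{\nu_\Sigma}h)$, and the value of the gauge $1$-form at $\Sigma$ depends on the remaining (lapse/shift-type) components, which you have left unspecified. What the constraints actually give is the \emph{second} vanishing: applying $\sfG_g$ to the gauge-fixed equation and evaluating at $(\nu_\Sigma,\cdot)$ yields $(\sfG_g\delta_g^*\Ups(h))(\nu_\Sigma,\cdot)=f(\nu_\Sigma,\cdot)-D_g\Ein(h)(\nu_\Sigma,\cdot)=0$, which together with $\Ups(h)|_\Sigma=0$ (and only together with it) forces $\nabla_{\nu_\Sigma}\Ups(h)|_\Sigma=0$. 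So you must either arrange $\Ups(h)|_\Sigma=0$ by hand, solving for the undetermined components of the Cauchy data, or do what the paper does: replace the gauge term by $\delta_g^*(\delta_g\sfG_g h-\theta)$, where $\theta$ is any extension of $\delta_g\sfG_g\tilde h|_\Sigma$ for a fixed $\tilde h$ with the prescribed Cauchy data, so that $\eta:=\delta_g\sfG_g h-\theta$ vanishes at $\Sigma$ by construction. Either fix is routine, but as stated the step is a gap, and it is exactly the point the paper's $\theta$-device is designed to handle.

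A smaller inaccuracy: $\delta_g\delta_g^*$ does \emph{not} have scalar wave principal part on $1$-forms (its symbol is $\half\bigl(|\xi|_g^2\,\omega+\la\xi,\omega\ra\,\xi\bigr)$, with a nonscalar rank-one term); the operator with principal part $\half|\xi|_g^2$ is $\delta_g\sfG_g\delta_g^*$. Correspondingly the pure-gauge correction in your surjectivity step should be $\sfG_g\delta_g^*\omega$ with $\delta_g\sfG_g\delta_g^*\omega=-\delta_g f^0$, and in the injectivity step the right-hand side of the gauge-fixed equation and the propagated identity should carry the trace reversal (the paper solves $D_g\Ric(h)+\delta_g^*\eta=\sfG_g f$ and propagates $\delta_g\sfG_g\delta_g^*\eta=0$). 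These insertions of $\sfG_g$ are cosmetic but necessary for the well-posedness claims you invoke.
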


In other words, $D_g\Ein(h)=f\in\CI_{\rm sc}\cap\ker\delta_g$ has a solution $h\in\CI_{\rm sc}$ if and only if all `charges'~\eqref{EqIIsoMap} vanish. (In particular, when $\sK(M,g)=\{0\}$, then $\delta_g f=0$ is sufficient for the solvability of~\eqref{EqI}.) The proof of solvability utilizes the Cauchy problem for a gauge-fixed version of this equation. The Cauchy data must be chosen to satisfy the linearized constraint equations on $\Sigma$ with source $\psi=f(\nu_\Sigma,\cdot)$ while being compactly supported; by results of Corvino \cite{CorvinoScalar} and Chru\'sciel--Delay \cite{ChruscielDelayMapping}, this is possible if and only if $\psi$ is orthogonal to the cokernel of the linearized constraints map, which can be canonically identified with $\sK(M,g)$ by a result of Moncrief \cite{MoncriefLinStabI}. We recall Moncrief's result in Proposition~\ref{PropK} and give a new perspective on its proof. For a finite regularity version of Theorem~\ref{ThmI}, see Theorem~\ref{ThmSobolev}.

In Theorem~\ref{ThmNoObstr}, we show that if $\Sigma$ is noncompact and one drops the support assumptions on $h$, there are no obstructions to solvability beyond $\delta_g f=0$, the reason being that the aforementioned cokernel on the dual space $\sE'(\Sigma)$ of $\CI(\Sigma)$ is trivial. When $M$ and $\Sigma$ have more structure, e.g.\ if they are asymptotically flat, one can make more precise statements regarding the weights at infinity of $f$ and $h$ depending on which (some of) the additional obstructions given by~\eqref{EqIIsoMap} disappear; we shall not discuss this here.

\begin{rmk}[Cosmological constant]
\label{RmkILambda}
  Our arguments go through with purely notational modifications if $\Ein(g)-\Lambda g=0$ where $\Lambda\in\R$ is the cosmological constant. In this case, Theorem~\ref{ThmI} becomes a characterization of those $f\in\CI_{\rm sc}\cap\ker\delta_g$ for which $D_g\Ein(h)-\Lambda h=f$ has a solution $h\in\CI_{\rm sc}$.
\end{rmk}

Our motivation for solving $D_g\Ein(h)=f$ with nontrivial $f$, and understanding the obstructions to solvability, comes from perturbation theory. To give a concrete example, suppose $(M,g)$ is a vacuum spacetime, and we wish to modify it near a timelike geodesic $\gamma$, in local Fermi normal coordinates $(t,x)$ given by $\gamma=\{(t,0)\colon t\in\R\}$, by gluing in a small black hole; in $3+1$ dimensions this could be a Schwarzschild black hole with mass $\eps>0$, given by the metric $-\dd t^2+\dd x^2+\frac{2\eps}{r}(\dd t^2+\dd r^2)+\cO(\eps^2 r^{-2})$ where $r=|x|$. The naive ansatz $g_\eps=g+\chi(r)\frac{2\eps}{r}(\dd t^2+\dd r^2)$ for the modified spacetime metric leads to $\Ein(g_\eps)=\Ein(g)+\eps f+\cO(\eps^2)$ where $f=\cO(1)$ near the gluing region $\supp\chi'\subset\{r>0\}$ (ignoring the singularity of $f$ at $r=0$, which one must deal with separately), and $\delta_g f=0$ by the second Bianchi identity for $g_\eps=g+\cO(\eps)$. One then wishes to eliminate the error $f$ by replacing $g_\eps$ by $g_\eps+\eps h$ where $h$ solves $D_g\Ein(h)=-f$ (while being more regular at $r=0$ than $\frac{2\eps}{r}(\dd t^2+\dd r^2)$). For the details of such a gluing procedure, see \cite{HintzGlueLocI}.

Prior work on solutions of the linearized Einstein equations has largely focused on solutions of the \emph{homogeneous} equation $D_g\Ein(h)=0$. Control of solutions modulo pure gauge solutions (i.e.\ symmetric gradients $\delta_g^*\omega$ of 1-forms $\omega$ on $M$, which always satisfy this equation) is one important problem, especially in the context of stability problems \cite{ReggeWheelerSchwarzschild}. More pertinent to this work is the problem of \emph{linearization stability} as introduced by Fischer--Marsden \cite{FischerMarsdenDeformation,FischerMarsdenLinStabEinstein}, namely whether such an infinitesimal deformation $h$ can be integrated to a nonlinear solution, i.e.\ whether there exists a family $g_s$ of metrics with $\Ein(g_s)=0$ and $h=\frac{\dd}{\dd s} g_s|_{s=0}$. This is the original context of \cite{MoncriefLinStabI}, which shows that linearization stability at $(M,g)$ holds for the Einstein equations if and only if $\sK(M,g)=\{0\}$. The necessary conditions for the existence of $g_s$ whose linearization is $h$ in the presence of nontrivial Killing vector fields were found in \cite{MoncriefLinStabII,ArmsMarsdenMoncriefStructureII,FischerMarsdenMoncriefEinstein}.

In this context, Theorem~\ref{ThmI} gives a necessary and sufficient condition on $f\in\CI_{\rm sc}\cap\ker\delta_g$ so that there exist spacetime metrics $g_s$, $s\in(-1,1)$, with $\Ein(g_s)=s f+\cO(s^2)$. (Absent a general natural physical prescription of further lower order terms beyond $s f$, we do not study the problem of determining when a metric $\tilde g$ near $g$ with, say, $\Ein(\tilde g)=s f$, $s$ small, exists.)  The problem of prescribing the \emph{nonlinear} Ricci curvature tensor of a Lorentzian metric on a given smooth manifold has been studied by DeTurck \cite{DeTurckPrescribedRicciLorentz}; the question of the solvability of the analogue of the constraint equations with source, see \cite[Equations~(3.1)--(3.2)]{DeTurckPrescribedRicciLorentz}, is however not discussed there.

\section{Killing vector fields and the linearized constraint equations}
\label{SK}

We recall classical results by Moncrief \cite{MoncriefLinStabI} and Fischer--Marsden--Moncrief \cite{FischerMarsdenMoncriefEinstein}. For any Lorentzian metric $g$ for which $\Sigma$ is spacelike, the 1-form
\[
  \Ein(g)(\nu_\Sigma,\cdot) \in \CI(\Sigma;T^*_\Sigma M)
\]
depends only on the first and second fundamental forms $\gamma(X,Y)=g(X,Y)$, $k(X,Y)=\la\nabla_X\nu_\Sigma,Y\ra$, $X,Y\in T\Sigma$, of $\Sigma$; this gives rise to the constraints map
\begin{equation}
\label{EqKPEin}
  P(\gamma,k) := \Ein(g)(\nu_\sigma,\cdot) = \Bigl(\frac12\bigl(R_\gamma-|k|_\gamma^2+(\tr_\gamma k)^2\bigr), -\delta_\gamma k-\dd\tr_\gamma k\Bigr).
\end{equation}
We write $D_{\gamma,k}P$ for the linearization of $P$ at $(\gamma,k)$; thus $D_{\gamma,k}P\in\Diff^2(\Sigma;S^2 T^*\Sigma\oplus S^2 T^*\Sigma;T^*_\Sigma M)$. (Here we identify $T_\Sigma M=\ul\R\oplus T\Sigma$ using the orthogonal projections onto $\R\nu_\Sigma$ and $T\Sigma$, and likewise for the cotangent bundles.) Suppose that $\Ein(g)=0$. If $h\in\CI(M;S^2 T^*M)$, and $\dot\gamma,\dot k\in\CI(\Sigma;S^2 T^*\Sigma)$ denote the linearized initial data at $\Sigma$ (i.e.\ the derivatives at $s=0$ of the initial data of $g+s h$), then linearizing~\eqref{EqKPEin} implies
\begin{equation}
\label{EqKPEinLin}
  D_g\Ein(h)(\nu_\sigma,\cdot) = D_{\gamma,k}P(\dot\gamma,\dot k).
\end{equation}

\begin{prop}[Killing vector fields and the linearized constraints map]
\label{PropK}
  (See~\cite{MoncriefLinStabI} and also \cite[Lemma~2.2]{FischerMarsdenMoncriefEinstein}.) Let $(M,g)$ be globally hyperbolic with $\Ein(g)=0$. Then the map $\cV(M)\ni X\mapsto X|_\Sigma\in\CI(\Sigma;T_\Sigma M)$ induces a linear isomorphism
  \begin{equation}
  \label{EqK}
    \sK(M,g) \to \ker (D_{\gamma,k}P)^*.
  \end{equation}
\end{prop}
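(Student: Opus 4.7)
The plan is to prove that the restriction map $\iota\colon X\mapsto X|_\Sigma\in\CI(\Sigma;\ul\R\oplus T\Sigma)$ (the pair $(N,Y)$ under $T_\Sigma M=\ul\R\oplus T\Sigma$) is well defined into $\ker(D_{\gamma,k}P)^*$, injective, and surjective. The central analytic tool is the wave equation $\Box_g X=0$ satisfied by every Killing vector field on a Ricci-flat spacetime (consequence of the Killing equation, the contracted Bianchi identity, and $\Ric(g)=0$), whose Cauchy problem on $(M,g)$ is well posed by global hyperbolicity.

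For well-definedness, I would take any $(\dot\gamma,\dot k)\in\CIc(\Sigma;S^2 T^*\Sigma\oplus S^2 T^*\Sigma)$ and realize it as the linearized initial data at $\Sigma$ of some $h\in\CI_{\rm sc}(M;S^2 T^*M)$. Applying~\eqref{EqIDiv} to $h$, pairing with the Killing 1-form $X^\flat$, and integrating by parts over a bounded slab $U$ with $\Sigma\subset\partial U$ and $\supp h\cap\partial U\subset\Sigma$, the Killing identity $\delta_g^* X^\flat=\tfrac12\cL_X g=0$ eliminates the bulk contribution; by~\eqref{EqKPEinLin} what remains is $\int_\Sigma\la(N,Y),D_{\gamma,k}P(\dot\gamma,\dot k)\ra\,\dd\sigma=0$, which by arbitrariness of the data gives $(N,Y)\in\ker(D_{\gamma,k}P)^*$. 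For injectivity, $X|_\Sigma=0$ forces $\nabla_V X|_\Sigma=0$ for every $V\in T\Sigma$, and the antisymmetry of $\nabla X$ implied by the Killing equation then forces $\nabla_{\nu_\Sigma}X|_\Sigma=0$; both Cauchy data for $\Box_g X=0$ vanish on $\Sigma$, so $X\equiv 0$.

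For surjectivity, given $(N,Y)\in\ker(D_{\gamma,k}P)^*$, I would set $X|_\Sigma:=N\nu_\Sigma+Y$, then choose $\nabla_{\nu_\Sigma}X|_\Sigma$ so that $\cL_X g|_\Sigma=0$ and $(\nabla_{\nu_\Sigma}\cL_X g)|_\Sigma=0$, and solve $\Box_g X=0$ with this data to obtain a global smooth $X$. Setting $h:=\cL_X g$, the diffeomorphism covariance of $\Ein$ together with $\Ein(g)=0$ give $D_g\Ein(h)=\cL_X\Ein(g)=0$; coupled with the wave equation for $X$, this closes up to a homogeneous linear hyperbolic system on $h$ (for instance by invoking the standard decomposition $D_g\Ein(h)=\tfrac12\Box_L h+\delta_g^* Y$ with $Y$ the de Donder 1-form, and controlling $Y$ via $\Box_g X=0$). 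Since $h$ and its normal derivative vanish on $\Sigma$ by construction, uniqueness propagates $h\equiv 0$, and hence $X$ is Killing with $\iota(X)=(N,Y)$.

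The technical heart is the KID step inside surjectivity: one must verify that $(D_{\gamma,k}P)^*(N,Y)=0$ is exactly the integrability criterion permitting $\nabla_{\nu_\Sigma}X|_\Sigma$ to be chosen compatibly with $\cL_X g|_\Sigma=0=(\nabla_{\nu_\Sigma}\cL_X g)|_\Sigma$. Moncrief's original argument expands $(D_{\gamma,k}P)^*$ componentwise and matches against the tangential Killing equations at $\Sigma$; the \emph{new perspective} that the author promises just before the statement will presumably replace this direct calculation with a Green-identity argument dual to the well-definedness step, reading the KID kernel directly off the Cauchy trace of the hyperbolic theory for $\Box_g X=0$.
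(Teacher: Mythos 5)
Your well-definedness and injectivity arguments are fine and essentially coincide with the paper's (the paper phrases the Green identity as constancy of a flux integral over a family of hypersurfaces, and proves that a Killing field vanishing on a hypersurface vanishes identically via a prolongation ODE rather than via $\Box_g X=0$, but these are cosmetic differences). The gap is in surjectivity, and it is exactly the step you flag as ``the technical heart'' and then do not carry out. Moreover, the way you set that step up cannot work as stated: you propose to \emph{choose} $\nabla_{\nu_\Sigma}X|_\Sigma$ so that both $\cL_X g|_\Sigma=0$ and $(\nabla_{\nu_\Sigma}\cL_X g)|_\Sigma=0$. But the tangential--tangential components $(\cL_X g)(V,W)$, $V,W\in T\Sigma$, equal $(\cL_Y\gamma)(V,W)+2 N k(V,W)$ and depend only on $X|_\Sigma=(N,Y)$, not on $\nabla_{\nu_\Sigma}X$; and once you solve $\Box_g X=0$, the second normal derivatives entering $(\nabla_{\nu_\Sigma}\cL_X g)|_\Sigma$ are determined by the Cauchy data, so they are not free either. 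The only thing the choice of $\nabla_{\nu_\Sigma}X|_\Sigma$ buys is $(\cL_X g)(\nu_\Sigma,\cdot)|_\Sigma=0$. The vanishing of all remaining components of the $1$-jet of $\cL_X g$ at $\Sigma$ is precisely what must be \emph{deduced} from the hypothesis $(D_{\gamma,k}P)^*(N,Y)=0$; that deduction is the entire content of the surjectivity direction (the KID theorem), and your proposal contains no argument for it beyond the remark that Moncrief does a componentwise computation and that the paper ``presumably'' does something slicker.

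For the record, the paper's route is the following. It fixes the $1$-jet of $\omega=X^\flat$ at $\Sigma$ by imposing only $\omega|_\Sigma=\omega_{(0)}$ and $(\delta_g^*\omega)(\nu_\Sigma,\cdot)|_\Sigma=0$, extends by solving the wave equation $\delta_g\sfG_g\delta_g^*\omega=0$ (which on a Ricci-flat background is your $\Box_g X=0$), sets $\pi:=\sfG_g\delta_g^*\omega$, and notes that $\pi|_\Sigma=0$ and that $\sfG_g\pi$ satisfies a tensorial wave equation because $D_g\Ein(\delta_g^*\omega)=0$. The hypothesis $(D_{\gamma,k}P)^*(\omega_{(0)}^\sharp)=0$ is then rewritten as the distributional identity $(D_g\Ein)^*\bigl((\nu_\Sigma^\flat\otimes_s\omega_{(0)})\delta(\Sigma)\bigr)=0$, equivalently $[D_g\Ric,H]\pi=0$ with $H$ the Heaviside function of $\Sigma$; expanding this commutator and using $\pi|_\Sigma=0$ yields $\nabla_{\nu_\Sigma}\pi|_\Sigma=0$, whence $\pi\equiv 0$ by the wave equation and $\omega$ is Killing. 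This commutator identity is the Green-identity replacement for Moncrief's componentwise matching that you anticipated; it, or some equivalent derivation of the vanishing of the full $1$-jet of $\pi$ from the adjoint-kernel hypothesis, is what your proposal is missing.
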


Our proof of the surjectivity of~\eqref{EqK} is based on a distributional characterization of $\ker (D_{\gamma,k}P)^*$ which seems to not have been noted before; see~\eqref{EqKp0}--\eqref{EqKpi}. Before starting the proof in earnest, we observe that $\sK(M,g)\ni X\mapsto X|_\Sigma\in\CI(\Sigma;T_\Sigma M)$ is injective, in view of the following result:

\begin{lemma}[Unique determination of Killing vector fields]
\label{LemmaKillingUniq}
  (Cf.\ \cite[Lemma~4.1]{MoncriefLinStabI}.) Let $(M,g)$ be a smooth connected $N$-dimensional pseudo-Riemannian manifold. Then the dimension of the space $\sK(M)$ of Killing vector fields on $(M,g)$ satisfies $\dim\sK(M)\leq\frac{N(N+1)}{2}$. Moreover, if $X\in\sK(M)$ vanishes along a hypersurface $\Sigma\subset M$, then $X=0$ on $M$.
\end{lemma}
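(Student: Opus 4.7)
The plan is the classical prolongation strategy: from the Killing equation $\nabla_\mu X_\nu + \nabla_\nu X_\mu = 0$ one derives a closed first-order linear system for the pair $(X, A)$ with $A := \nabla X \in \Lambda^2 T^*M$ (antisymmetric by the Killing equation itself), and concludes by uniqueness for linear ODEs along curves. Concretely, I would first establish the classical second-order identity
\begin{equation*}
  \nabla_\mu \nabla_\nu X_\rho = R_{\rho\nu\mu}{}^\sigma X_\sigma,
\end{equation*}
obtained by differentiating the Killing equation, forming an appropriate cyclic sum, and applying the Ricci commutation identity $[\nabla_\mu, \nabla_\nu] X_\rho = -R^\sigma{}_{\rho\mu\nu} X_\sigma$ together with the first Bianchi identity. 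Combined with the defining relation $\nabla X = A$, this furnishes a closed first-order linear system for $(X, A) \in \CI(M; T^*M \oplus \Lambda^2 T^*M)$, which restricts along any smooth curve in $M$ to a linear ODE with smooth coefficients.

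For the dimension bound, uniqueness of linear ODEs together with connectedness of $M$ then implies that any $X \in \sK(M)$ is determined by its $1$-jet $\bigl(X(p_0), \nabla X(p_0)\bigr) \in T_{p_0} M \oplus \Lambda^2 T^*_{p_0} M$ at any chosen point $p_0 \in M$; the evaluation map is therefore injective, and its target has dimension $N + \binom{N}{2} = N(N+1)/2$. For the rigidity claim, assuming $X|_\Sigma = 0$, I would show that $\nabla X|_\Sigma = 0$ as well, and then invoke the injectivity above at any $p_0 \in \Sigma$. The claim reduces to verifying, in local coordinates adapted to $\Sigma$, that $(\nabla X)(Y,\cdot)|_\Sigma = 0$ for every $Y \in T\Sigma$: tangential derivatives of the coordinate components $X_\mu$ vanish along $\Sigma$ since $X_\mu \equiv 0$ there, while the Christoffel correction terms contributing to $\nabla_Y X$ for tangential $Y$ vanish pointwise on $\Sigma$ because $X|_\Sigma = 0$. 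Antisymmetry of $\nabla X$ then propagates this to $(\nabla X)(\cdot, Y)|_\Sigma = 0$ for $Y \in T\Sigma$ and to $(\nabla X)(\nu, \nu)|_\Sigma = 0$ for a transverse vector $\nu$, completing the argument.

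The main obstacle is the second-order prolongation identity $\nabla^2 X = R \cdot X$: while classical, it is where all the nontrivial content of the proof lives and demands some care with sign and index conventions in the pseudo-Riemannian setting. Everything else amounts to standard linear ODE uniqueness and elementary index manipulations, both conceptually routine.
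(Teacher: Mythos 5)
Your proposal is correct and follows essentially the same route as the paper: a prolongation of the Killing equation to a closed first-order linear system for $(X,\nabla X)$, uniqueness for linear ODEs along curves plus connectedness for the dimension bound, and the observation that $X|_\Sigma=0$ forces $\nabla X|_\Sigma=0$ for the rigidity claim. The only cosmetic differences are that the paper derives the second-order identity via a cyclic-permutation argument modulo lower-order terms rather than writing the explicit curvature formula, and it deduces $\nabla X|_\Sigma=0$ directly from the Killing equation evaluated on $\Sigma$ rather than from tangential derivatives plus antisymmetry; both variants are equivalent.
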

\begin{proof}
  This follows from the well-known fact that $X$ is uniquely determined by $X(p)\in T_p M$ and $\nabla X(p)\in T_p^* M\otimes T_p M$. (Since the Killing equation imposes $\half N(N+1)$ linearly independent constraints on $X(p),\nabla X(p)$, the space of Killing vector fields has dimension $\leq N+N^2-\half N(N+1)=\half N(N+1)$.) We recall a prolongation argument for the proof of this statement. The Killing equation
  \begin{equation}
  \label{EqKillingUniqEq}
    \la\nabla_V X,W\ra+\la\nabla_W X,V\ra=0,\qquad V,W\in\cV(M),
  \end{equation}
  implies for $Z\in\cV(M)$ the following identity, where we write `$\equiv$' for equality modulo terms involving only $X$ and $\nabla X$ but no higher derivatives:
  \begin{align*}
    &\nabla_Z(\nabla X)(V,W) \equiv \la\nabla_Z\nabla_V X,W\ra = Z\la\nabla_V X,W\ra - \la\nabla_V X,\nabla_Z W\ra \equiv -Z\la\nabla_W X,V\ra \\
    & \quad\equiv -\la\nabla_Z\nabla_W X,V\ra \equiv -\la\nabla_W\nabla_Z X,V\ra \equiv -\nabla_W(\nabla X)(Z,V).
  \end{align*}
  Cyclically permuting the vector fields $(Z,V,W)$ twice more, we obtain $\nabla_Z(\nabla X)(V,W)\equiv-\nabla_Z(\nabla X)(V,W)$, and thus $\nabla_Z(\nabla X)(V,W)$ is an (explicit) expression involving only $X$ and $\nabla X$. Therefore, if $\alpha\colon[0,1]\to M$ is a smooth curve, then there exists a smooth bundle endomorphism $F$ on the restriction of $T M\oplus(T^*M\otimes T M)$ to $\alpha([0,1])$ so that $Z(t)=(X,\nabla X)|_{\alpha(t)}$ satisfies the ODE
  \[
    \frac{D Z}{\dd t} = F(Z).
  \]
  In particular, if $Z(0)=0$, then $Z=0$ on $\alpha([0,1])$, and thus $Z=0$ on $M$ since $M$ is connected.

  To prove the final claim, note that if $V,W$ are vector fields on $M$, then~\eqref{EqKillingUniqEq} gives $V\la X,W\ra+W\la X,V\ra=0$ at $\Sigma$ (where $X=0$). Fix $V$ to be transversal to $\Sigma$ at a point $p\in\Sigma$. For all $W$ which are tangent to $\Sigma$, we have $W\la X,V\ra=W(0)=0$ and thus $V\la X,W\ra=0$, while for $W=V$ we obtain $V\la X,V\ra=0$. Therefore, $\nabla X=0$ at $p$, from where $X=0$ follows from the first part of the proof.
\end{proof}

\begin{proof}[Proof of Proposition~\usref{PropK}]
  \pfstep{Well-definedness; injectivity.} Let $X\in\sK(M,g)$. Given $h\in\CI_{\rm sc}(M;S^2 T^*M)$, write $\dot\gamma,\dot k\in\CIc(\Sigma;S^2 T^*\Sigma)$ for the linearized initial data at $\Sigma$. Since $\delta_g(D_g\Ein(h))=0$, the fact that $X$ is Killing implies that also $D_g\Ein(h)(\cdot,X)$ is divergence-free. Therefore, if $\Sigma_t$, $t\in[-1,1]$, is a smooth family of spacelike hypersurfaces, with unit normal $\nu_{\Sigma_t}$ and surface measure $\dd\sigma_t$, so that $\Sigma_t$ equals $\Sigma$ outside a large compact set, then
  \[
    I(t) := \int_{\Sigma_t} D_g\Ein(h)(\nu_{\Sigma_t},X)\,\dd\sigma_t
  \]
  is independent of $t$ and thus equals
  \[
    I(0) = \int_\Sigma D_g\Ein(h)(\nu_\Sigma,X)\,\dd\sigma = \int_\Sigma \la D_{\gamma,k}P(\dot\gamma,\dot k),X\ra\,\dd\sigma = \int_\Sigma \la(\dot\gamma,\dot k),(D_{\gamma,k}P)^*(X)\ra\,\dd\sigma.
  \]
  Since we can choose $(\dot\gamma,\dot k)\in\CIc(\Sigma;S^2 T^*\Sigma)\oplus\CIc(\Sigma;S^2 T^*\Sigma)$ to be arbitrary at $\Sigma$ and $0$ at $\Sigma_1$, we deduce that $I(0)=0$ for all $\dot\gamma,\dot k$, and therefore $(D_{\gamma,k}P)^*(X)=0$. Together with Lemma~\ref{LemmaKillingUniq}, we conclude that the map~\eqref{EqK} is well-defined and injective. (This argument is taken from~\cite[Lemma~1.5 and Corollary~1.9]{FischerMarsdenMoncriefEinstein}.)

  \pfstep{Surjectivity.} Suppose that $\omega_{(0)}\in\CI(\Sigma;T^*_\Sigma M)$ satisfies
  \begin{equation}
  \label{EqKomega0}
    (D_{\gamma,k}P)^*(\omega_{(0)}^\sharp)=0,
  \end{equation}
  where we use the musical isomorphism for $g$. Write $\delta_g^*$ for the symmetric gradient on 1-forms; this is related to the Lie derivative via $\delta_g^*\omega=\frac12\cL_{\omega^\sharp}g$. We need to show that there exists a solution $\omega\in\CI(M;T^*M)$ of the Killing equation $\delta_g^*\omega=0$ which satisfies $\omega|_\Sigma=\omega_{(0)}$. At $\Sigma$, the requirements $\omega=\omega_{(0)}$ and $0=2(\delta_g^*\omega)(\nu_\Sigma,V)=\nabla_{\nu_\Sigma}\omega(V)+\nabla_V\omega(\nu_\Sigma)$ for $V=\nu_\Sigma$ and $V\in T\Sigma$ uniquely determine the 1-jet $\omega_{(1)}$ of $\omega$ at $\Sigma$. Write $\sfG_g=I-\half g\tr_g$; then we may extend $\omega_{(1)}$ to a 1-form on $M$ by solving the wave equation
  \[
    \delta_g\sfG_g\delta_g^*\omega = 0
  \]
  on $M$, with the Cauchy data of $\omega$ matching $\omega_{(1)}$. Let
  \[
    \pi:=\sfG_g\delta_g^*\omega;
  \]
  thus $\pi|_\Sigma=0\in\CI(\Sigma;S^2 T^*_\Sigma M)$ and $\delta_g\pi=0$ globally. Our aim is to show $\pi=0$. Note that since $\Ein(g)=0$, we have
  \[
    0 = D_g\Ein(\delta_g^*\omega) = \sfG_g D_g\Ric(\sfG_g\pi) = \sfG_g\Bigl(\frac12\Box_g - \delta_g^*\delta_g\sfG_g + \sR_g\Bigr)(\sfG_g\pi) = \frac12\sfG_g(\Box_g+2\sR_g)(\sfG_g\pi)
  \]
  by~\cite[Equation~(2.4)]{GrahamLeeConformalEinstein}, where $\sR_g$ is a 0-th order operator involving the curvature of $g$. In view of this hyperbolic equation for $\sfG_g\pi$, we only need to show that
  \begin{equation}
  \label{EqKpiJet}
    \text{the 1-jet of $\sfG_g\pi$ (or, equivalently, that of $\pi$) vanishes at $\Sigma$}
  \end{equation}
  to conclude the proof. This is where the equation satisfied by $\omega_{(0)}$ will enter.

  To wit,~\eqref{EqKomega0} is equivalent to
  \[
    0 = \int_\Sigma \la D_g\Ein(h), \nu_\Sigma\otimes_s\omega_{(0)}^\sharp \ra\,\dd\sigma = \la D_g\Ein(h), (\nu_\Sigma\otimes_s\omega_{(0)}^\sharp)\delta(\Sigma)\ra
  \]
  for all $h\in\CI_{\rm sc}(M;S^2 T^*M)$, where we use the distributional pairing on $M$ on the right. This is further equivalent to
  \begin{equation}
  \label{EqKp0}
    0 = (D_g\Ein)^*\bigl( (\nu_\Sigma^\flat\otimes_s\omega_{(0)})\delta(\Sigma) \bigr) = D_g\Ein( \dd H\otimes_s\omega_{(0)} )
  \end{equation}
  where $H\in L^1_\loc(M)$ is $1$, resp.\ $0$ in the past, resp.\ future of $\Sigma$. Finally, since $\dd H\otimes_s\omega_{(0)}=\delta_g^*(H\omega)-H\delta_g^*\omega$ and $D_g\Ein\circ\delta_g^*=0$ as well as $D_g\Ein=\sfG_g\circ D_g\Ric\circ\sfG_g$, we find that~\eqref{EqKomega0} is equivalent to the distributional equation
  \begin{equation}
  \label{EqKpi}
    [D_g\Ric,H]\pi=0,\qquad \pi=\sfG_g\delta_g^*\omega.
  \end{equation}
  Expanding this and using $\pi|_\Sigma=0$ gives
  \[
    0 = [\Box_g,H]\pi - 2[\delta_g^*\delta_g\sfG_g,H]\pi = [\Box_g,H]\pi - 2[\delta_g^*,H](\delta_g\sfG_g\pi) - 2\delta_g^*\bigl([\delta_g\sfG_g,H]\pi\bigr),
  \]
  with the final term on the right being zero since $[\delta_g\sfG_g,H]\pi=-(\sfG_g\pi)(\nabla H,\cdot)=0$ (this being the product of a $\delta$-distribution at $\Sigma$ with a tensor vanishing at $\Sigma$). In abstract index notation, we thus have
  \[
    -g^{\mu\nu}\bigl(H_{;\mu\nu}\pi_{\kappa\lambda} + H_{;\mu}\pi_{\kappa\lambda;\nu} + H_{;\nu}\pi_{\kappa\lambda;\mu}\bigr) - H_{;\kappa}(\delta_g\sfG_g\pi)_\lambda - H_{;\lambda}(\delta_g\sfG_g\pi)_\kappa = 0.
  \]
  Since $H_{;\mu}\pi_{\kappa\lambda}=0$, the first two terms in parentheses cancel. Using further that $\delta_g\sfG_g\pi=\delta_g\pi+\half\dd\tr_g\pi=\half\dd\tr_g\pi$ gives $-H_{;\mu}\pi_{\kappa\lambda;}{}^\mu - \frac12\bigl(H_{;\kappa}\pi_\mu{}^\mu{}_{;\lambda} + H_{;\lambda}\pi_\mu{}^\mu{}_{;\kappa}\bigr) = 0$. Choosing coordinates $z^0,\ldots,z^n$ so that $\Sigma=\{z^0=0\}$ and $\dd z^0$ is a unit covector, this gives
  \[
    0 = \pi_{\kappa\lambda;0} - \frac12\bigl(\delta_{0\kappa}\pi_\mu{}^\mu{}_{;\lambda} + \delta_{0\lambda}\pi_\mu{}^\mu{}_{;\kappa}\bigr) = \pi_{\kappa\lambda;0} + \delta_{0\kappa}\delta_{0\lambda}(\pi_{0 0;0}-\pi_m{}^m{}_{;0})
  \]
  at $\Sigma$, where $m$ runs from $1$ to $n$; here we use that $\pi$ and therefore also its spatial covariant derivatives vanish at $\Sigma$. For $\kappa=i\geq 1$, $\lambda\geq 0$, this gives $\pi_{i\lambda;0}=0$. For $\kappa=\lambda=0$ then, we get $\pi_{0 0;0}=0$. This shows~\eqref{EqKpiJet} and finishes the proof.
\end{proof}

\section{Proof of the main result; variants}
\label{SPf}

The independence of the map~\eqref{EqIIsoMap} of the choice of Cauchy hypersurface $\Sigma$ in~\eqref{EqIIsoMap} follows from the divergence theorem, since $\delta_g f=0$ and $X\in\sK(M,g)$ implies that $f(X,\cdot)$ is divergence-free. The vanishing of $\int_\Sigma f(\nu_\Sigma,X)\,\dd\sigma$ for $f=D_g\Ein(h)$ was shown in the first step of the proof of Proposition~\ref{PropK}. Thus,~\eqref{EqIIso} is well-defined.

For the proof of Theorem~\ref{ThmI}, we require the following (well-known) result:

\begin{thm}[Solvability of the linearized constraints]
\label{ThmPfConstraints}
  Let $(\Sigma,\gamma)$ be a smooth connected Riemannian manifold, and let $k\in\CI(\Sigma;S^2 T^*\Sigma)$. Suppose $\psi\in\CIc(\Sigma;\ul\R\oplus T^*\Sigma)$ satisfies $\la\psi,f^*\ra=0$ for all $f^*\in\ker(D_{\gamma,k}P)^*\subset\CI(\Sigma;\ul\R\oplus T^*\Sigma)$. Then there exist $\dot\gamma,\dot k\in\CIc(\Sigma;S^2 T^*\Sigma)$ so that
  \[
    D_{\gamma,k}P(\dot\gamma,\dot k) = \psi.
  \]
\end{thm}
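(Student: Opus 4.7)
The plan is to adapt the Corvino--Chru\'sciel--Delay method for solving underdetermined elliptic equations with control on supports. Write $L = D_{\gamma,k}P$; then $L$ is underdetermined elliptic (as a first--order part plus lower order terms, after a careful inspection of the expressions in~\eqref{EqKPEin}), equivalently $L^*$ is \emph{overdetermined} elliptic: its principal symbol is pointwise injective. The solvability question for $L(\dot\gamma,\dot k)=\psi$ with support control thus becomes a question of coercivity for $L^*$ on appropriate weighted spaces.

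First, I would fix a precompact connected open set $U\subset\Sigma$ with smooth boundary and $\supp\psi\subset U$, and choose a smooth weight $\phi\colon U\to(0,\infty)$ vanishing to infinite order at $\pa U$, e.g.\ $\phi=\exp(-1/\rho^N)$ for $\rho$ a boundary defining function and $N$ large. The main technical input, taken from \cite{CorvinoScalar, ChruscielDelayMapping}, is the weighted Poincar\'e-type coercivity estimate
\[
  \|u\|_{\cH}^2 \leq C\int_U |L^* u|^2\phi\,\dd\sigma
\]
valid on a suitable Hilbert space $\cH$ of sections over $U$, \emph{modulo} the finite-dimensional space $\cK_U$ of $L^2_\phi$-elements of $\ker L^*$ on $U$. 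Such estimates follow from the overdetermined ellipticity of $L^*$ combined with a judicious choice of weight so that error terms from commuting $\phi$ past $L^*$ can be absorbed.

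Next, applying the Riesz representation theorem to the coercive bilinear form $(u,v)\mapsto \int_U\la L^* u,L^* v\ra\phi\,\dd\sigma$ on the quotient $\cH/\cK_U$, I obtain a solution of $L(\phi L^* u)=\psi$ in $U$ provided $\psi\perp\cK_U$ in $L^2(U)$. Setting $(\dot\gamma,\dot k):=\phi L^* u$, the infinite-order vanishing of $\phi$ at $\pa U$ ensures that $(\dot\gamma,\dot k)$ extends smoothly by zero to a compactly supported tensor on $\Sigma$, and by construction $L(\dot\gamma,\dot k)=\psi$.

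Finally, I must reconcile the hypothesis $\la\psi,f^*\ra=0$ for $f^*\in\ker(D_{\gamma,k}P)^*\subset\CI(\Sigma;\ul\R\oplus T^*\Sigma)$ with the orthogonality $\psi\perp\cK_U$ that the above argument requires. Any element of $\cK_U$ solves $L^* f^*=0$ on $U$; by the overdetermined ellipticity of $L^*$ via Aronszajn unique continuation---or, more concretely, via Proposition~\ref{PropK} combined with the ODE-type rigidity in Lemma~\ref{LemmaKillingUniq} applied to the associated Killing field on the domain of dependence of $U$---any such $f^*$ extends uniquely to a global element of $\ker L^*\subset\CI(\Sigma;\ul\R\oplus T^*\Sigma)$. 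Thus the hypothesis implies $\psi\perp\cK_U$, and the construction produces the desired $(\dot\gamma,\dot k)$. The main obstacle is the technical verification of the weighted coercivity estimate and the precise specification of $\cH$; for this I would invoke the detailed treatment in \cite{ChruscielDelayMapping} rather than redo it here.
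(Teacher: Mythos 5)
Your overall strategy is the same as the paper's: the paper's proof of Theorem~\ref{ThmPfConstraints} is essentially a citation of Corvino \cite{CorvinoScalar} and Chru\'sciel--Delay \cite{ChruscielDelayMapping}, resting on exactly the ingredients you describe (weighted coercivity for the overdetermined elliptic operator $(D_{\gamma,k}P)^*$ on a precompact domain $\cU\supset\supp\psi$ with exponentially degenerating weight, followed by a variational/duality argument and extension by zero). So the architecture is fine. However, there is one genuine gap in your final step, where you reconcile the hypothesis $\psi\perp\ker(D_{\gamma,k}P)^*$ (the \emph{global} kernel on $\Sigma$) with the condition $\psi\perp\cK_U$ (the kernel on $U$) that the variational argument actually requires. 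You claim that every element of $\cK_U$ extends uniquely to a global element of $\ker(D_{\gamma,k}P)^*$, citing unique continuation. Unique continuation gives the \emph{uniqueness} of such an extension, i.e.\ injectivity of the restriction map $\ker_\Sigma(D_{\gamma,k}P)^*\to\cK_U$; it does not give \emph{existence}, i.e.\ surjectivity. And surjectivity is false for a general fixed $U$: a KID on $U$ corresponds to a Killing field on the domain of dependence $D(U)$, which need not extend to $M$. For instance, if the metric is flat on a ball $B\supset\supp\psi$ but has no global symmetries, then $\cK_B$ is the full flat-space KID space while the global kernel is trivial, and a $\psi$ with nonvanishing local charges satisfies your hypothesis vacuously yet cannot be written as $D_{\gamma,k}P(\dot\gamma,\dot k)$ with support in $B$. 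This is consistent with the theorem, which only demands compact support in $\Sigma$, not support in $U$ --- but it shows your argument, as written, breaks down.

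The repair is to choose $U$ large enough that $\cK_U$ coincides with the restriction of the global kernel. This is always possible: take a connected precompact exhaustion $U_1\subset U_2\subset\cdots$ of $\Sigma$. By unique continuation the restriction maps $\cK_{U_{j+1}}\to\cK_{U_j}$ are injective, so $\dim\cK_{U_j}$ is nonincreasing and bounded (by Lemma~\ref{LemmaKillingUniq} it is at most $\tfrac{(n+1)(n+2)}{2}$), hence stabilizes for $j\geq j_0$; from then on the restriction maps are isomorphisms, so every element of $\cK_{U_{j_0}}$ extends compatibly to all $U_j$ and hence to a global element of $\ker(D_{\gamma,k}P)^*$. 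Running your construction on $U=U_j$ with $j\geq j_0$ and $U_j\supset\supp\psi$ then makes the hypothesis equivalent to $\psi\perp\cK_U$, and the rest of your argument goes through. With this modification (and modulo the weighted coercivity estimate, which you correctly delegate to \cite{ChruscielDelayMapping}) the proof is correct and matches the paper's.
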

\begin{proof}
  The existence of regular solutions with controlled support is due to Corvino \cite{CorvinoScalar}, with the existence of smooth solutions proved by Chru\'sciel--Delay \cite{ChruscielDelayMapping}; see \cite{DelayCompact} for general results of this flavor. The key point is that $(D_{\gamma,k}P)^*$ is an overdetermined (Douglis--Nirenberg) elliptic partial differential operator for which a priori estimates hold on spaces of tensors defined on a smoothly bounded open precompact subset $\cU\subset\Sigma$, containing $\supp\psi$, which allow for exponential growth at $\pa\cU$. By duality, this gives the solvability of $D_{\gamma,k}P(\dot\gamma,\dot k)=\psi$, provided $\psi$ is orthogonal to the cokernel, with $\dot\gamma,\dot k$ exponentially decaying at $\pa\cU$, whence their extension by $0$ to $\Sigma\setminus\cU$ furnishes the desired solution.
\end{proof}

\begin{proof}[Proof of Theorem~\usref{ThmI}]
  \pfstep{Injectivity of~\eqref{EqIIso}.} Suppose $f\in\CI_{\rm sc}(M;S^2 T^*M)$ satisfies $\delta_g f=0$ and $\int_\Sigma f(\nu_\Sigma,X)\,\dd\sigma=0$ for all $X\in\sK(M,g)$. By Proposition~\ref{PropK}, $f(\nu_\Sigma,\cdot)\in\CIc(\Sigma;T^*_\Sigma M)$ is orthogonal to $\ker_{\CI(\Sigma;T^*_\Sigma M)} (D_{\gamma,k}P)^*$.
  
  Choose $\dot\gamma,\dot k\in\CIc(\Sigma;S^2 T^*\Sigma)$ according to Theorem~\ref{ThmPfConstraints} with $\psi=f(\nu_\Sigma,\cdot)$. Pick then $h_0,h_1\in\CIc(\Sigma;S^2 T^*_\Sigma M)$ so that $\dot\gamma,\dot k$ are the linearized initial data corresponding to any metric perturbation $\tilde h\in\CI_{\rm sc}(M;S^2 T^*M)$ with Cauchy data $h_0=\tilde h|_\Sigma$, $h_1=\nabla_{\nu_\Sigma}\tilde h$. Let $\theta\in\CIc(M;T^*M)$ be any extension of $\theta_{(0)}:=\delta_g\sfG_g\tilde h|_\Sigma\in\CIc(\Sigma;T^*_\Sigma M)$. We then solve the initial value problem for the gauge-fixed linearized Einstein vacuum equation
  \begin{equation}
  \label{EqPfGaugeFixed}
    D_g\Ric(h) + \delta_g^*(\delta_g\sfG_g h-\theta) = \sfG_g f,\qquad (h|_\Sigma,\nabla_{\nu_\Sigma}h) = (h_0,h_1);
  \end{equation}
  the solution $h$ satisfies $h\in\CI_{\rm sc}(M;S^2 T^*M)$ by finite speed of propagation. If we set $\eta:=\delta_g\sfG_g h-\theta$, then we have
  \begin{equation}
  \label{EqPfEta0}
    \eta|_\Sigma = (\delta_g\sfG_g\tilde h-\theta)|_\Sigma = 0
  \end{equation}
  by definition of $\theta$. Moreover, applying $\sfG_g$ to~\eqref{EqPfGaugeFixed} and recalling~\eqref{EqKPEinLin}, we obtain
  \[
    (\sfG_g\delta_g^*\eta)(\nu_\Sigma,\cdot) = f(\nu_\Sigma,\cdot) - D_g\Ein(h)(\nu_\Sigma,\cdot) = 0.
  \]
  Together with~\eqref{EqPfEta0}, this implies $\nabla_{\nu_\Sigma}\eta=0$. Finally, applying $\delta_g\sfG_g$ to~\eqref{EqPfGaugeFixed} and using the linearized second Bianchi identity together with $\delta_g f=0$ gives the hyperbolic equation
  \[
    \delta_g\sfG_g\delta_g^*\eta = 0
  \]
  for $\eta$, whence $\eta=0$ on $M$ and thus $D_g\Ric(h)=\sfG_g f$. Therefore, $f=D_g\Ein(h)$ projects to $0$ in the quotient space~\eqref{EqIIso}.

  \pfstep{Surjectivity of~\eqref{EqIIso}.} Given $\lambda\in\sK(M,g)^*$, there exists a 1-form $\psi\in\CIc(\Sigma;T^*_\Sigma M)$ with $\la\psi,X|_\Sigma\ra=\lambda(X)$ for all $X\in\sK(M,g)$. Indeed, it suffices to arrange this for $X$ in a basis of $\sK(M,g)$, in which case it follows from the linear independence of the restrictions of the basis elements to $\Sigma$ (a consequence of Lemma~\ref{LemmaKillingUniq}). Pick then any $f_{(0)}\in\CIc(\Sigma;S^2 T^*_\Sigma M)$ with $f_{(0)}(\nu_\Sigma,\cdot)=\psi$, and let $\tilde f\in\CIc(M;S^2 T^*M)$ denote any compactly supported symmetric 2-tensor with $\tilde f=f_{(0)}$ at $\Sigma$. We claim that there exists $\omega\in\CI_{\rm sc}(M;T^*M)$ with $(\omega|_\Sigma,\nabla_{\nu_\Sigma}\omega)=0$ so that
  \[
    f := \tilde f + \sfG_g\delta_g^*\omega \in \ker\delta_g.
  \]
  Indeed, we simply solve the initial value problem
  \[
    \delta_g\sfG_g\delta_g^*\omega = -\delta_g\tilde f \in \CIc(M;S^2 T^*M),\qquad
    (\omega|_\Sigma,\nabla_{\nu_\Sigma}\omega) = (0,0).
  \]
  Since $\sfG_g\delta_g^*\omega$ vanishes at $\Sigma$ for such $\omega$, we have $\int_\Sigma f(\nu_\Sigma,X)\,\dd\sigma=\int_\Sigma f_{(0)}(\nu_\Sigma,X)\,\dd\sigma=\int_\Sigma\la\psi,X\ra\,\dd\sigma=\lambda(X)$ for all $X\in\sK(M,g)$ still, and the surjectivity of~\eqref{EqIIso} follows. This completes the proof of Theorem~\ref{ThmI}.
\end{proof}

\begin{rmk}[Equivalence to solving the linearized constraints]
\label{RmkEquivConstraints}
  Not only did Theorem~\ref{ThmPfConstraints} play a key role in the argument; one can conversely \emph{deduce} Theorem~\ref{ThmPfConstraints} from Theorem~\ref{ThmI}. Indeed, given $\psi$ satisfying the assumptions of Theorem~\ref{ThmPfConstraints}, one constructs $f\in\ker\delta_g\cap\CI_{\rm sc}(M;S^2 T^*M)$ with $f(\nu_\Sigma,\cdot)=\psi$ as in the last part of the above proof. Theorem~\ref{ThmI} produces $h\in\CI_{\rm sc}(M;S^2 T^*M)$ with $D_g\Ein(h)=f$; evaluating this on $\Sigma$ at $(\nu_\Sigma,\cdot)$ gives $D_{\gamma,k}P(\dot\gamma,\dot k)=\psi$ where $\dot\gamma,\dot k\in\CIc(\Sigma;S^2 T^*\Sigma)$ are the linearized initial data induced by $h$.
\end{rmk}

\begin{thm}[Finite regularity]
\label{ThmSobolev}
  Let $s\geq 0$. Suppose that $f\in H^s_{\rm sc}(M;S^2 T^*M)$ (i.e.\ $f$ lies in $H^s_\loc(M;S^2 T^*M)$ and has spatially compact support) satisfies $\delta_g f=0$, and $\int_\Sigma f(\nu_\Sigma,X)\,\dd\sigma=0$ for all $X\in\sK(M,g)$.\footnote{In coordinates $(z^0,\ldots,z^n)$ in which $\dd z^0$ is a unit conormal at $\Sigma=\{z^0=0\}$, we have $f_{0\mu;0}=f_{j\mu;}{}^j$. Therefore $f(\nu_\Sigma,\cdot)\in H_\cp^{s-\frac12}(\Sigma;T^*_\Sigma M)$ is well-defined for $s>-\frac12$.} Then there exists $h\in H^{s+1}_{\rm sc}(M;S^2 T^*M)$ with $D_g\Ein(h)=f$.
\end{thm}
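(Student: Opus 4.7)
The plan is to reprise the proof of Theorem~\ref{ThmI} while tracking Sobolev indices at each step. The two ingredients that must be upgraded to finite regularity are the solvability of the linearized constraints (Theorem~\ref{ThmPfConstraints}) and the well-posedness of the gauge-fixed linear wave equation; both are standard, and the argument only differs from the smooth case in bookkeeping.

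First, by the footnote in the theorem statement, $\psi:=f(\nu_\Sigma,\cdot)$ is a well-defined element of $H^{s-1/2}_\cp(\Sigma;T^*_\Sigma M)$ (using $\delta_g f=0$), and Proposition~\ref{PropK} combined with the charge-vanishing hypothesis implies that $\psi$ annihilates $\ker(D_{\gamma,k}P)^*$. I would then invoke a finite-regularity version of Theorem~\ref{ThmPfConstraints} to solve $D_{\gamma,k}P(\dot\gamma,\dot k)=\psi$ with $(\dot\gamma,\dot k)\in H^{s+3/2}_\cp(\Sigma;S^2 T^*\Sigma)\times H^{s+1/2}_\cp(\Sigma;S^2 T^*\Sigma)$; these indices reflect the Douglis--Nirenberg structure of $D_{\gamma,k}P$ (second order in $\dot\gamma$, first order in $\dot k$).

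Next, I would choose Cauchy data $(h_0,h_1)\in H^{s+1/2}_\cp(\Sigma;S^2 T^*_\Sigma M)\times H^{s-1/2}_\cp(\Sigma;S^2 T^*_\Sigma M)$ inducing $(\dot\gamma,\dot k)$ as linearized initial data, which is straightforward because $(\dot\gamma,\dot k)$ are more regular than required: one may take $h_0$ to have tangential part $\dot\gamma$ and vanishing normal components, then solve algebraically for $h_1$. Extend $\theta_{(0)}:=\delta_g\sfG_g\tilde h|_\Sigma\in H^{s-1/2}_\cp$ (which depends only on $h_0,h_1$ and their spatial derivatives) to a spatially compactly supported $\theta\in H^{s+1}(M;T^*M)$. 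The gauge-fixed equation~\eqref{EqPfGaugeFixed} then has principal part $\frac12\Box_g$, source $\sfG_g f+\delta_g^*\theta\in H^s_{\rm sc}$, and Cauchy data at the natural Sobolev levels for an $H^{s+1}$ wave solution; standard linear hyperbolic theory produces $h\in H^{s+1}_{\rm sc}(M;S^2 T^*M)$.

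Finally, the gauge-propagation argument from the proof of Theorem~\ref{ThmI} transfers unchanged: one verifies that $\eta:=\delta_g\sfG_g h-\theta\in H^s$ has vanishing $0$- and $1$-jet at $\Sigma$ (by construction of $\theta$ and by~\eqref{EqKPEinLin}) and satisfies the hyperbolic equation $\delta_g\sfG_g\delta_g^*\eta=0$ distributionally; uniqueness for this wave equation in the appropriate Sobolev class then gives $\eta=0$ and hence $D_g\Ein(h)=f$. The principal technical obstacle is the finite-regularity version of Theorem~\ref{ThmPfConstraints}: it is obtained by repeating the duality argument sketched in its proof, which rests on weighted $L^2$ a priori estimates for the Douglis--Nirenberg elliptic operator $(D_{\gamma,k}P)^*$ on a smoothly bounded precompact domain $\cU\supset\supp\psi$, these estimates being available at any Sobolev order by standard elliptic regularity.
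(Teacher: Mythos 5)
Your overall strategy---rerunning the proof of Theorem~\ref{ThmI} while tracking Sobolev indices---is the same as the paper's, but your bookkeeping loses half a derivative at the very first step, and this loss propagates so that the argument as written only produces $h\in H^{s+1/2}_{\rm sc}$, not the claimed $h\in H^{s+1}_{\rm sc}$. Concretely: you take $\psi=f(\nu_\Sigma,\cdot)\in H^{s-1/2}_\cp(\Sigma;T^*_\Sigma M)$ via the footnote's trace at the \emph{given} hypersurface, so the constraint solver returns at best $(\dot\gamma,\dot k)\in H^{s+3/2}_\cp\times H^{s+1/2}_\cp$ and hence Cauchy data no better than $(h_0,h_1)\in H^{s+3/2}_\cp\times H^{s+1/2}_\cp$. (The levels $H^{s+1/2}\times H^{s-1/2}$ you actually wrote are lower still and are certainly not ``the natural Sobolev levels for an $H^{s+1}$ wave solution''---those would be $H^{s+1}\times H^{s}$.) Consequently $\theta_{(0)}=\delta_g\sfG_g\tilde h|_\Sigma$ lies only in $H^{s+1/2}_\cp$ (or $H^{s-1/2}_\cp$ with your stated data), so it cannot be extended to $\theta\in H^{s+1}(M;T^*M)$ as you assert (the trace of $H^{s+1}(M)$ on $\Sigma$ is $H^{s+1/2}(\Sigma)$), and even extending $\theta$ constant in time leaves the forcing term $\delta_g^*\theta$ half a derivative below $H^s$. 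The net effect is a right-hand side in $L^1_\loc H^{s-1/2}_\cp$ and a wave solution only of class $\cC^0 H^{s+1/2}_\cp$.

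The paper avoids this by \emph{not} taking the trace at the prescribed $\Sigma$: foliating a neighborhood of $\Sigma$ by level sets $\Sigma_\tau$ of a time function, one has $f\in L^1_\loc((-1,1);H^s_\cp)$, so $f|_{\Sigma_\tau}\in H^s_\cp$ for \emph{almost every} $\tau$ with no half-derivative loss; one fixes such a $\tau$ and runs the entire construction at $\Sigma_\tau$. (The charge hypothesis transfers from $\Sigma$ to $\Sigma_\tau$ since $f(\cdot,X)$ is divergence-free for $X\in\sK(M,g)$, so $\int f(\nu,X)\,\dd\sigma$ is slice-independent.) This yields $(\dot\gamma,\dot k)\in H^{s+2}_\cp\times H^{s+1}_\cp$, $\theta_{(0)}\in H^{s+1}_\cp$, forcing in $L^1_\loc H^s_\cp$, and a solution $h\in\cC^0 H^{s+1}_\cp\cap\cC^1 H^s_\cp$, as required. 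Your remaining ingredients---the finite-regularity version of Theorem~\ref{ThmPfConstraints} and the gauge-propagation argument for $\eta$---are fine once this is repaired.
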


Together with the surjectivity of~\eqref{EqIIso}, this shows that~\eqref{EqIIsoMap} induces an isomorphism
\[
  \{ f\in H^s_{\rm sc} \colon \delta_g f=0 \} / \{ D_g\Ein(h) \colon h\in H^{s+1}_{\rm sc}\ \text{with}\ D_g\Ein(h)\in H^s_{\rm sc} \} \to \sK(M,g)^*.
\]

\begin{proof}[Proof of Theorem~\usref{ThmSobolev}]
  Foliating a neighborhood of $\Sigma$ by level sets $\Sigma_s:=t^{-1}(s)$ (which we identify with $\Sigma$), $s\in(-1,1)$, of a time function $t\in\CI(M)$, we have, a fortiori, $f\in L^1_\loc((-1,1),H^s_\cp(\Sigma;S^2 T^*_\Sigma M))$, whence $f|_{t=\tau}\in H^s_\cp$ for almost all $\tau$; fix such a $\tau\in(-1,1)$. Using a finite regularity version of Theorem~\ref{ThmPfConstraints}, one can then find $\dot\gamma\in H^{s+2}_\cp(\Sigma;S^2 T^*\Sigma)$, $\dot k\in H^{s+1}_\cp(\Sigma;S^2 T^*\Sigma)$, so that $D_{\gamma,k}P(\dot\gamma,\dot k)=f(\nu_{\Sigma_\tau},\cdot)$ at $\Sigma_\tau$, and thus Cauchy data $h_0\in H^{s+2}_\cp$, $h_1\in H^{s+1}_\cp$ inducing $\dot\gamma,\dot k$; we can then compute $\theta_{(0)}\in H^{s+1}_\cp(\Sigma;T^*_\Sigma M)$ as in the earlier proof, extend it to be constant in $t$ followed by cutting it off to a neighborhood of $t=\tau$ to obtain $\theta\in\CIc((-1,1);H^{s+1}_\cp)$. The equation~\eqref{EqPfGaugeFixed} we then solve for $h$ has spatially compactly supported forcing in $L^1_\loc H^s_\cp$ and thus a spatially compactly supported solution of class $\cC^0 H_\cp^{s+1}\cap\cC^1 H_\cp^s$. The arguments leading to $D_g\Ein(h)=f$ then go through without changes since $\eta=\delta_g\sfG_g h-\theta$ and $\sfG_g\delta_g^*\eta(\nu_{\Sigma_\tau},\cdot)$ vanish at $\Sigma_\tau$. Finally, $f\in H_{\rm sc}^s$ follows by standard hyperbolic theory \cite[\S23.2]{HormanderAnalysisPDE3}.
\end{proof}

\begin{thm}[Solvability without support conditions]
\label{ThmNoObstr}
  Suppose the Cauchy hypersurface $\Sigma$ inside the globally hyperbolic spacetime $(M,g)$ with $\Ein(g)=0$ is noncompact. Let
  \[
    f\in\CI(M;S^2 T^*M),\qquad \delta_g f=0.
  \]
  Then there exists $h\in\CI(M;S^2 T^*M)$ with $D_g\Ein(h)=f$.
\end{thm}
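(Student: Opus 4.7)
My plan is to mimic the proof of Theorem~\ref{ThmI} but drop every support condition along the way. First, I would solve the linearized constraint equation $D_{\gamma,k}P(\dot\gamma,\dot k) = f(\nu_\Sigma,\cdot)$ with $(\dot\gamma,\dot k)\in\CI(\Sigma;S^2 T^*\Sigma\oplus S^2 T^*\Sigma)$ (no compact support control); realize $(\dot\gamma,\dot k)$ as the linearized initial data of some smooth $\tilde h\in\CI(M;S^2 T^*M)$; extend the induced $\theta_{(0)}=\delta_g\sfG_g\tilde h|_\Sigma$ smoothly to $\theta\in\CI(M;T^*M)$; and solve the gauge-fixed linearized Einstein equation~\eqref{EqPfGaugeFixed} for $h\in\CI(M;S^2 T^*M)$ by the standard global well-posedness of linear wave equations on the globally hyperbolic spacetime $(M,g)$. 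The verification that the gauge 1-form $\eta:=\delta_g\sfG_g h-\theta$ vanishes on $M$, and hence $D_g\Ein(h)=f$, is pointwise and transfers verbatim from the proof of Theorem~\ref{ThmI}.

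The genuinely new ingredient is solvability of the linearized constraints in $\CI(\Sigma;\ul\R\oplus T^*\Sigma)$ without support conditions. By duality (using that $\sE'(\Sigma)$ is the topological dual of $\CI(\Sigma)$), this reduces to the vanishing of $\ker_{\sE'(\Sigma;T_\Sigma M)}(D_{\gamma,k}P)^*$. Any $\omega_{(0)}^\sharp$ in this kernel is smooth by elliptic regularity for the overdetermined elliptic operator $(D_{\gamma,k}P)^*$, and being in $\sE'$ has compact support. Proposition~\ref{PropK} then produces $X\in\sK(M,g)$ with $X|_\Sigma=\omega_{(0)}^\sharp$; since $\Sigma$ is noncompact, $X$ must vanish on some nonempty open $U\subset\Sigma$. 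The local computation in the proof of Lemma~\ref{LemmaKillingUniq} (which only requires that $X$ vanish on $\Sigma$ in a neighborhood of a given point) yields $\nabla X=0$ on $U$, and the ODE/connectedness conclusion of that lemma forces $X\equiv 0$ on $M$, i.e.\ $\omega_{(0)}^\sharp=0$.

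The hard part is converting this vanishing of the $\sE'$-cokernel into an actual surjectivity statement for $D_{\gamma,k}P$ between Fr\'echet spaces of smooth sections on the noncompact manifold $\Sigma$, which requires a closed-range theorem for the overdetermined elliptic adjoint system. My preferred route is to combine an exhaustion $\cU_1\Subset\cU_2\Subset\cdots\nearrow\Sigma$ by smoothly bounded precompact sets with the local solvability (with support control) of Corvino \cite{CorvinoScalar} and Chru\'sciel--Delay \cite{ChruscielDelayMapping} on each $\cU_j$, and to patch the local solutions by an inductive Mittag--Leffler argument that exploits the triviality of the global cokernel to glue them into a smooth global $(\dot\gamma,\dot k)$ on all of $\Sigma$; this functional-analytic wrap-up, rather than any fresh PDE input, is where essentially all of the work goes.
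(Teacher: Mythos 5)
Your reduction of the theorem to the solvability of $D_{\gamma,k}P(\dot\gamma,\dot k)=f(\nu_\Sigma,\cdot)$ in $\CI(\Sigma)$, and your identification of the triviality of $\ker_{\sE'}(D_{\gamma,k}P)^*$ as the key point, coincide with the paper's proof; so does your argument for that triviality (elliptic regularity, Proposition~\ref{PropK}, compact support plus noncompactness of $\Sigma$, and unique continuation via Lemma~\ref{LemmaKillingUniq}). Where you diverge is precisely the step you flag as containing ``essentially all of the work'': converting the vanishing of the $\sE'$-cokernel into surjectivity of $D_{\gamma,k}P$ on the Fr\'echet space $\CI(\Sigma)$. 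The paper closes this by a direct appeal to the Tr\`eves surjectivity criterion \cite[Theorem~37.2]{TrevesTVS} --- a continuous linear map of Fr\'echet spaces is surjective iff its transpose is injective with weak* closed range --- and verifies the weak* closed range via \cite[Theorem~37.1]{TrevesTVS} from the semi-local closed range property of the overdetermined elliptic operator $(D_{\gamma,k}P)^*$ on $\dot H^{s+2}(K)\oplus\dot H^{s+1}(K)\to\dot H^s(K)$ for compact $K\subset\Sigma$. No exhaustion or patching is needed.

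Your proposed alternative for this step is not carried out, and as sketched it has two genuine gaps. First, applying Theorem~\ref{ThmPfConstraints} (Corvino--Chru\'sciel--Delay) on $\cU_j$ requires orthogonality of the cut-off data to the \emph{local} cokernel, i.e.\ to the KIDs on $\cU_j$; these are generically nonzero even when the global cokernel vanishes (e.g.\ on a small ball in a region isometric to Minkowski space), so triviality of the global cokernel does not by itself license the local solves. This is repairable --- by unique continuation the restriction maps between local cokernels along the exhaustion are injective, so their dimensions stabilize and for large $j$ the local cokernel consists of restrictions of global kernel elements --- but you must say so. Second, and more seriously, the supports of the resulting local solutions are not locally finite: the $j$-th solution is only controlled to lie in some large $\cU_{k(j)}$ and may meet every fixed compact set, so the sum does not converge without a Runge-type approximation theorem for the (huge) kernel of the underdetermined operator $D_{\gamma,k}P$ with which to correct the tails. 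You give no argument for such an approximation property, and it is not an off-the-shelf fact. The clean way to finish is the paper's: replace the patching by the abstract closed-range criterion, for which the $\sE'$-kernel computation you already have, together with overdetermined ellipticity of $(D_{\gamma,k}P)^*$, suffices.
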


Thus, there are no obstructions to solvability if we drop the support assumptions on $h$, and one can even drop the support assumptions on the source term $f$. (Even for $f\in\CI_{\rm sc}(M;S^2 T^*M)$, the solution $h$ produced below is typically large at infinity.)

\begin{proof}[Proof of Theorem~\usref{ThmNoObstr}]
  We need to prove the existence of $\dot\gamma,\dot k\in\CI(\Sigma;S^2 T^*\Sigma)$ so that $D_{\gamma,k}P(\dot\gamma,\dot k)=f(\nu_\Sigma,\cdot)$. Once this is done, the arguments using the gauge-fixed equation~\eqref{EqPfGaugeFixed} apply verbatim to produce a solution of $D_g\Ein(f)=h$. The existence of $(\dot\gamma,\dot k)$ follows from \cite[Theorem~37.2]{TrevesTVS} once we show that the adjoint operator
  \[
    (D_{\gamma,k}P)^*\colon\sE'(\Sigma;\ul\R\oplus T^*\Sigma)\to\sE'(\Sigma;S^2 T^*\Sigma\oplus S^2 T^*\Sigma)
  \]
  has trivial kernel and weak* closed range. The injectivity follows from the fact that an element of the kernel vanishes outside some compact set and is thus identically zero, as follows from the relationship of $\ker(D_{\gamma,k}P)^*$ with Killing vector fields on $(M,g)$ and Lemma~\ref{LemmaKillingUniq} (or directly on the level of $(D_{\gamma,k}P)^*$ via \cite[Lemma~4.3]{HintzGlueID}). The weak* closed range property follows, via the characterization \cite[Theorem~37.1]{TrevesTVS}, from the fact that $(D_{\gamma,k}P)^*$ is overdetermined (Douglis--Nirenberg) elliptic, which indeed implies that for all $s\in\R$ and compact $K\subset\Sigma$, the operator $(D_{\gamma,k}P)^*\colon\dot H^{s+2}(K)\oplus\dot H^{s+1}(K;T^*_K\Sigma)\to\dot H^s(K;S^2 T^*_K\Sigma\oplus S^2 T^*_K\Sigma)$ has closed range. (Here $\dot H^s(K)$ denotes the subspace of $H^s_\loc(\Sigma)$ consisting of distributions with support in $K$.)
\end{proof}

\bibliographystyle{alphaurl}


\begin{thebibliography}{AMM82}

\bibitem[AMM82]{ArmsMarsdenMoncriefStructureII}
Judith~M. Arms, Jerrold~E. Marsden, and Vincent Moncrief.
\newblock The structure of the space of solutions of {E}instein's equations
  {II}: several {K}illing fields and the {E}instein--{Y}ang--{M}ills equations.
\newblock {\em Annals of Physics}, 144(1):81--106, 1982.
\newblock \href {https://doi.org/10.1016/0003-4916(82)90105-1}
  {\path{doi:10.1016/0003-4916(82)90105-1}}.

\bibitem[CD03]{ChruscielDelayMapping}
Piotr~T. Chru{\'s}ciel and Erwann Delay.
\newblock {On mapping properties of the general relativistic constraints
  operator in weighted function spaces, with applications}.
\newblock {\em Mem. Soc. Math. France}, 94:1--103, 2003.

\bibitem[Cor00]{CorvinoScalar}
Justin Corvino.
\newblock Scalar curvature deformation and a gluing construction for the
  {E}instein constraint equations.
\newblock {\em Comm. Math. Phys.}, 214(1):137--189, 2000.

\bibitem[Del12]{DelayCompact}
Erwann Delay.
\newblock {S}mooth compactly supported solutions of some underdetermined
  elliptic {PDE}, with gluing applications.
\newblock {\em Communications in Partial Differential Equations},
  37(10):1689--1716, 2012.

\bibitem[DeT83]{DeTurckPrescribedRicciLorentz}
Dennis~M. DeTurck.
\newblock The {C}auchy problem for {L}orentz metrics with prescribed {R}icci
  curvature.
\newblock {\em Compositio Mathematica}, 48(3):327--349, 1983.

\bibitem[FM73]{FischerMarsdenLinStabEinstein}
Arthur~E. Fischer and Jerrold~E. Marsden.
\newblock {L}inearization stability of the {E}instein equations.
\newblock {\em Bulletin of the American Mathematical Society}, 79(5):997--1003,
  1973.

\bibitem[FM75]{FischerMarsdenDeformation}
Arthur~E. Fischer and Jerrold~E. Marsden.
\newblock Deformations of the scalar curvature.
\newblock {\em Duke Mathematical Journal}, 42(3):519--547, 1975.

\bibitem[FMM80]{FischerMarsdenMoncriefEinstein}
Arthur~E. Fischer, Jerrold~E. Marsden, and Vincent Moncrief.
\newblock {The structure of the space of solutions of Einstein's equations. I.
  One Killing field}.
\newblock {\em Annales de l'IHP Physique th{\'e}orique}, 33(2):147--194, 1980.

\bibitem[GL91]{GrahamLeeConformalEinstein}
C.~Robin Graham and John~M. Lee.
\newblock Einstein metrics with prescribed conformal infinity on the ball.
\newblock {\em Adv. Math.}, 87(2):186--225, 1991.

\bibitem[Hin22]{HintzGlueID}
Peter Hintz.
\newblock Gluing small black holes into initial data sets.
\newblock {\em Preprint, arXiv:2210.13960}, 2022.

\bibitem[Hin23]{HintzGlueLocI}
Peter Hintz.
\newblock Gluing small black holes along timelike geodesics {I}: formal
  solution.
\newblock {\em Preprint}, 2023.

\bibitem[H{\"o}r07]{HormanderAnalysisPDE3}
Lars H{\"o}rmander.
\newblock {\em The analysis of linear partial differential operators. {III}}.
\newblock Classics in Mathematics. Springer, Berlin, 2007.

\bibitem[Mon75]{MoncriefLinStabI}
Vincent Moncrief.
\newblock Spacetime symmetries and linearization stability of the {E}instein
  equations. {I}.
\newblock {\em Journal of Mathematical Physics}, 16(3):493--498, 1975.
\newblock \href {https://doi.org/10.1063/1.522572}
  {\path{doi:10.1063/1.522572}}.

\bibitem[Mon76]{MoncriefLinStabII}
Vincent Moncrief.
\newblock Space–time symmetries and linearization stability of the {E}instein
  equations. {II}.
\newblock {\em Journal of Mathematical Physics}, 17(10):1893--1902, 1976.
\newblock \href {https://doi.org/10.1063/1.522814}
  {\path{doi:10.1063/1.522814}}.

\bibitem[RW57]{ReggeWheelerSchwarzschild}
Tullio Regge and John~A. Wheeler.
\newblock {S}tability of a {S}chwarzschild {S}ingularity.
\newblock {\em Phys. Rev.}, 108:1063--1069, Nov 1957.

\bibitem[Tr{\`e}67]{TrevesTVS}
Fran\c{c}ois Tr{\`e}ves.
\newblock {\em Topological vector spaces, distributions and kernels}.
\newblock Academic Press, New York-London, 1967.

\end{thebibliography}

\end{document}